\documentclass[11pt]{article}
\usepackage[latin1]{inputenc}
\usepackage{amsmath, graphicx, natbib, verbatim}
\usepackage{setspace, mathrsfs, amssymb}
\usepackage{amsthm, cases}
\usepackage{caption,subcaption}
\usepackage{adjustbox}
\usepackage{amsmath}
\usepackage{natbib}
\usepackage{verbatim,mathrsfs,amssymb,graphicx}
\usepackage{subcaption,setspace,cases,xcolor}

\onehalfspacing
\usepackage[hmargin=1in,vmargin=1in]{geometry}

\title{Itemwise conditionally independent nonresponse modeling for incomplete multivariate data}

\author{Mauricio Sadinle and Jerome P. Reiter
\\\\
\textsc{Duke University} }

\newtheorem{theorem}{Theorem}

\newtheorem{definition}{Definition}

\newcommand{\mcX}{\mathcal X}
\newcommand{\mcM}{\mathcal M}

\newcommand\indep{\perp\!\!\!\perp}
\begin{document}
\maketitle

\begin{abstract}
We introduce a nonresponse mechanism for multivariate missing data in which each study variable and its nonresponse indicator are conditionally
independent given the remaining variables and their nonresponse
indicators.  This is a nonignorable missingness mechanism, in that nonresponse for any item can depend on values of other items that are themselves missing.
 We show that, under this itemwise conditionally independent nonresponse assumption, one can define
and identify nonparametric saturated classes of joint multivariate
models for the study variables and their missingness indicators.  We also show how to perform sensitivity analysis to violations
of the conditional independence assumptions encoded by this missingness mechanism.  Throughout, we
illustrate the use of this modeling approach with data analyses.  
\end{abstract}

\textit{Key words and phrases:} Loglinear model; Missing not at random; Missingness mechanism; Nonignorable; Nonparametric saturated; Sensitivity analysis.

\section{Introduction}

When data are unintentionally missing, for example due to item nonresponse in surveys, analysts formally should  
base inferences on the joint distribution of the study variables and their missingness or nonresponse
indicators \citep{Rubin76}.  However, this distribution is not identifiable from the data alone \cite[see, e.g.][]{Little93,Robins97,DanielsHogan08}.  Analysts therefore have to rely on identifying assumptions that
are generally untestable. 

In this article, we define a nonresponse mechanism that allows for 
practical and general modeling approaches with
incomplete multivariate data.  We say that we have itemwise conditionally independent nonresponse when each study variable is conditionally
independent of its missingness indicator given the remaining study
variables and their missingness indicators.  This differs from 
missing at random \citep{Rubin76,LittleRubin02, Seamanetal13, MealliRubin15}, which technically
requires that the probability of the observed missingness pattern does not depend on unobserved values.
In fact, the itemwise conditionally independent nonresponse assumption encodes a
nonignorable missingness mechanism, since missingness for any variable can conditionally depend on 
unobserved values of the other variables.  We show that this assumption leads to
a class of nonparametric saturated distributions \citep{Robins97},
meaning that under this assumption we can identify a unique joint
distribution of the study variables and their nonresponse indicators
from the distribution of the observed data.  We show how to construct
this distribution for arbitrary types of study variables, illustrating  
with examples that involve categorical and continuous variables. We also discuss and
illustrate how to perform sensitivity analysis to violations of the conditional independence assumptions.  These sensitivity analyses 
are based on nonparametric saturated distributions and do not impose
restrictions on the observed-data distribution, which is a desirable property \citep{Robins97}. 

Itemwise conditionally independent nonresponse modeling adds to existing approaches to handle nonmonotone, nonignorable nonresponse. 
These approaches include the pattern mixture models of
\cite{Little93}, which impose different restrictions on
the (nonidentifiable) conditional distribution of the missing
variables given the observed variables and each missingness pattern; the permutation missingness models of \cite{Robins97}, which
for a specific ordering of the study variables assume that the
probability of observing the $k$th variable can depend on the previous
study variables and the subsequent observed variables; and, the
block-sequential models of \cite{ZhouBCMAR}, which make identifying
assumptions for blocks of study variables and their missingness
indicators.  
Of course, each of these methods encodes different reasons for missingness, and one typically cannot tell from the data alone which is most plausible. 
Some benefits of the itemwise conditionally independent nonresponse assumption, as we shall show, are that (i) it is straightforward to 
interpret and explain to non-experts, (ii) it can be implemented easily for many models, and, (iii) it is readily modified to allow interpretable sensitivity analysis. 

\section{Setup}

We consider $p$ random variables or items $X=(X_1,\ldots,X_p)$ taking values on a sample space $\mcX$. Let $M_j$ be the missingness or nonresponse indicator 
for item $j$, such that $M_j=1$ when item $j$ is missing and $M_j=0$ when it is observed.  Let $M=(M_1,\ldots,M_p)$ take values 
on $\mcM\subseteq \{0,1\}^p$.  An element $m=(m_1,\ldots,m_p)\in \{0,1\}^p$ is called a missingness pattern, which we shall 
sometimes represent as the string $m_1\ldots m_p$.  Given $m\in\mcM$, we define $\bar{m}=1_p-m$ to be the indicator vector of 
observed items, where $1_p$ is a vector of ones of length $p$.  For a missingness pattern $m$ we define $X_{m}=(X_j: m_j=1)$ to
 be the missing variables and $X_{\bar{m}}=(X_j: \bar m_j=1)$ to be the observed variables, which have sample spaces $\mcX_{m}$ 
and $\mcX_{\bar{m}}$, respectively.  We denote $M_{-j}=(M_1,\ldots,M_{j-1},M_{j+1},\ldots,M_p)$, and likewise $X_{-j}$.  Given
 a generic element of the sample space $x\in\mcX$, we define $x_{m}, x_{\bar{m}}$ and $x_{-j}$ similarly as with the random vectors, and likewise for an element $m\in\mcM$.

Let $\mu$ be a dominating measure for the distribution of $X$, and let $\nu$ represent the product measure between $\mu$ and the
 counting measure on $\mcM$. We assume that there is a positive probability of observing all the items simultaneously,
 that is, $0_p\in\mcM$, where $0_p$ is a vector of zeroes of length $p$.  We call the joint distribution of $X$ and $M$ the
 full-data distribution, and use $f$ to represent its density with respect to $\nu$.  We call the distribution 
involving the observed items and the missingness indicators the observed-data distribution, with 
density $f(X_{\bar{m}}=x_{\bar{m}},M=m)=\int_{\mcX_{m}}f(X=x,M=m)\mu(dx_{m})$.  We assume that the subset of
 $\mcX_{\bar{m}}\times \mcM$ where $f(X_{\bar{m}}=x_{\bar{m}},M=m)=0$ has probability zero.  
For any missingness pattern, we call the conditional distribution of the missing study variables given the observed data the missing-data distribution, with density $f(X_{m}=x_{m}\mid X_{\bar{m}}=x_{\bar{m}},M=m)$. 
We note that \cite{DanielsHogan08} refer to this as the extrapolation distribution. 
Finally, we call the distribution of $M$ given $X$ the missing-data or nonresponse mechanism, with density $f(M=m\mid X=x)$. 
When obvious from context we shall henceforth write $f(x,m)$ instead of $f(X=x,M=m)$, and likewise for other expressions.

A fundamental problem of inference with missing data is that the full-data distribution cannot be identified in a nonparametric fashion, which means that this distribution cannot be recovered asymptotically by repeatedly sampling from it.  Modeling assumptions have to be imposed on the full-data distribution for it to be obtainable from the observed-data distribution, which is all we can identify nonparametrically with infinite samples.  These assumptions represent identifiability restrictions which in turn define classes of full-data distributions that have a one-to-one correspondence with the observed-data distributions.  These classes are called nonparametric saturated \citep{Robins97} or nonparametric identified \citep{Vansteelandtetal06,DanielsHogan08}, 
of which the itemwise conditionally independent nonresponse class is a particular example.

\section{Modeling under itemwise conditionally independent nonresponse}

We begin with a formal definition of the itemwise conditionally independent nonresponse mechanism.
\begin{definition}\label{def:IMAR}
The nonresponse occurs in an itemwise conditionally independent fashion when
\begin{equation*}
X_j\indep M_j\mid  X_{-j},M_{-j}; ~~~ \hbox{for all} ~ j=1,\ldots,p.
\end{equation*}
\end{definition}

The conditional independence statements given by this assumption imply that, for each item $X_j$, its true value does not influence the probability of it being missing once we control for the values of the remaining items and missingness indicators.  It is worth noticing that this assumption does not exclude marginal dependencies between $X_j$ and $M_j$.  

We now show how to construct a full-data distribution such that it encodes the itemwise conditionally independent nonresponse assumption and perfectly fits $f(x_{\bar{m}},m)$ for all $(x_{\bar{m}},m)\in\mcX_{\bar{m}}\times\mcM$, i.e., the resulting class of distributions is nonparametric saturated.  For this purpose, we first need to define a partial order among the
 missingness patterns $\{0,1\}^p$ as follows. Given $m=(m_1,\ldots,m_p),m'=(m'_1,\ldots,m'_p)\in\{0,1\}^p$, we say $m\preceq m'$ if $m'_j=1$ for all $j$ such that $m_j=1$, that is, $m\preceq m'$ if $m'$ indicates at least the same missing items as $m$.  If $m\preceq m'$ but $m\neq m'$, we write $m\prec m'$.  For example, with $p=3$, $001\prec 101\prec 111$, but $001\not\prec 110$. 

\begin{theorem}\label{theo:ident}
For each missingness pattern $m\in\mcM\subseteq\{0,1\}^p$, given $f(x_{\bar{m}},m)>0$, let the function $\eta_{m}:\mcX_{\bar{m}}\mapsto \mathbb{R}$ be defined recursively as 
\begin{align*}
\eta_{m}(x_{\bar{m}}) &= \log f(x_{\bar{m}},m) - \log\int_{\mcX_{m}} \exp\left\{\sum_{m'\prec m} \eta_{m'}(x_{\bar{m}'})I(m'\in\mcM)\right\}\mu(dx_{m}).
\end{align*}
Then, 
\begin{equation}\label{eq:g}
g(x,m) = \exp\left\{\sum_{m'\preceq m} \eta_{m'}(x_{\bar{m}'})I(m'\in\mcM)\right\}
\end{equation}
satisfies
$$\int_{\mcX_{m}}g(x,m)\mu(dx_{m})=f(x_{\bar{m}},m),$$ for all $(x,m)\in\mcX\times\mcM$.
\end{theorem}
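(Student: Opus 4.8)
The plan is to verify the identity directly by unwinding the recursion that defines $\eta_m$, one missingness pattern $m\in\mcM$ at a time. The key structural observation is that, in the exponent of $g(x,m)$ in \eqref{eq:g}, the top term of the sum (the one with $m'=m$) depends on $x$ only through $x_{\bar m}$, and can therefore be factored out of the integral over the missing coordinates $x_m$.

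Concretely, I would first split the sum in \eqref{eq:g} into the $m'=m$ term and the terms with $m'\prec m$. Since $m\in\mcM$ we have $I(m\in\mcM)=1$, so
$$g(x,m)=\exp\{\eta_m(x_{\bar m})\}\,\exp\Big\{\textstyle\sum_{m'\prec m}\eta_{m'}(x_{\bar m'})I(m'\in\mcM)\Big\}.$$
Because $\eta_m(x_{\bar m})$ does not depend on $x_m$, integrating over $\mcX_m$ yields
$$\int_{\mcX_m} g(x,m)\,\mu(dx_m)=\exp\{\eta_m(x_{\bar m})\}\int_{\mcX_m}\exp\Big\{\textstyle\sum_{m'\prec m}\eta_{m'}(x_{\bar m'})I(m'\in\mcM)\Big\}\mu(dx_m).$$
The remaining integral is precisely the normalizing quantity appearing in the definition of $\eta_m$. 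Exponentiating that definition gives $\exp\{\eta_m(x_{\bar m})\}$ equal to $f(x_{\bar m},m)$ divided by this same integral, so the two integrals cancel and leave $f(x_{\bar m},m)$, which is the stated claim.

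The one delicate point, which I expect to be the main obstacle, is confirming that the recursion is well defined: for the logarithm in the definition of $\eta_m$ to make sense, the normalizing integral must be strictly positive and finite for $\mu$-almost every $x_{\bar m}$. I would establish this by induction on $\sum_j m_j$, an ordering compatible with $\prec$. In the base case $m=0_p$ the predecessor set is empty, the integral equals $1$, and $\eta_{0_p}(x)=\log f(x,0_p)$ is finite wherever $f(x,0_p)>0$. For the inductive step, the integrand is assembled solely from the lower-order functions $\eta_{m'}$, which are finite by the inductive hypothesis, so positivity is automatic; finiteness for almost every $x_{\bar m}$ follows from Tonelli's theorem together with the fact that the observed-data densities have bounded total mass, under a mild integrability condition that holds automatically when $X$ is discrete. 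Once well-definedness is secured along the partial order, the cancellation of the preceding paragraph applies verbatim to every $m\in\mcM$ and establishes the identity.
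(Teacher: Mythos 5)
Your proof is correct and is essentially the paper's own argument: factor $\exp\{\eta_{m}(x_{\bar{m}})\}$ out of the integral over $\mcX_{m}$ (since $\eta_{m}$ does not depend on the missing coordinates) and then substitute the defining recursion for $\eta_{m}(x_{\bar{m}})$, so the normalizing integrals cancel and leave $f(x_{\bar{m}},m)$. Your additional induction along the partial order to confirm that the recursion is well defined (positivity and finiteness of the normalizer) is a point the paper leaves implicit, but it does not change the route of the proof.
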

\begin{proof}
In general, for a pattern $m$, $\eta_{m}$ is not a function of the missing variables $X_{m}$, which justifies the expression
\begin{equation*}
\int_{\mcX_{m}}g(x,m)\mu(dx_{m}) = \exp\left\{\eta_{m}(x_{\bar{m}})\right\}\int_{\mcX_{m}} \exp\left\{\sum_{m'\prec m} \eta_{m'}(x_{\bar{m}'})I(m'\in\mcM)\right\}\mu(dx_{m}),
\end{equation*}
and replacing the expression of $\eta_{m}(x_{\bar{m}})$ completes the proof.  
\end{proof}

Theorem \ref{theo:ident} implies that $\int_{\mcX\times \mcM} g(x,m)\nu(dx\times dm)=1$, and therefore $g$ induces a distribution
 on the sample space $\mcX\times \mcM$. This full-data distribution is nonparametric identified by construction. We now show that it encodes the itemwise conditionally independent nonresponse assumption.  

\begin{theorem}
The missingness mechanism induced by $g$ in Theorem \ref{theo:ident} leads to itemwise conditionally independent nonresponse.
\end{theorem}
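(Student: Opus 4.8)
The plan is to verify the defining relation $X_j\indep M_j\mid X_{-j},M_{-j}$ directly from the closed form of $g$ in \eqref{eq:g}. Since $M_j$ is binary and $g$ is strictly positive wherever defined, this conditional independence is equivalent to the statement that, holding $x_{-j}$ and $m_{-j}$ fixed, the ratio $g(x,m^{(1)})/g(x,m^{(0)})$ does not depend on $x_j$, where $m^{(0)}$ and $m^{(1)}$ denote the two patterns sharing a common $m_{-j}$ with $m_j=0$ and $m_j=1$, respectively. So the first step is to record this reduction: a positive full-data density encodes itemwise conditionally independent nonresponse if and only if, for every $j$ and every such pair of patterns in $\mcM$, the log-difference $\log g(x,m^{(1)})-\log g(x,m^{(0)})$ is free of $x_j$. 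This is just the assertion that $\log g$ carries no $x_j$–$m_j$ interaction once $(x_{-j},m_{-j})$ is fixed.

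Next I would compute this log-difference using \eqref{eq:g}. Because $m^{(0)}\preceq m^{(1)}$, the set of patterns $m'$ with $m'\preceq m^{(0)}$ is contained in the set with $m'\preceq m^{(1)}$, so the two sums defining $\log g(x,m^{(0)})$ and $\log g(x,m^{(1)})$ share all their common terms and the difference telescopes to $\sum_{m'\preceq m^{(1)},\,m'\not\preceq m^{(0)}}\eta_{m'}(x_{\bar{m}'})I(m'\in\mcM)$. The key combinatorial observation is that, since $m^{(0)}$ and $m^{(1)}$ differ only in coordinate $j$, a pattern $m'$ satisfies $m'\preceq m^{(1)}$ but $m'\not\preceq m^{(0)}$ exactly when $m'_j=1$. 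Hence the log-difference equals $\sum_{m'\preceq m^{(1)},\,m'_j=1}\eta_{m'}(x_{\bar{m}'})I(m'\in\mcM)$.

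The crux is then immediate: for any pattern $m'$ with $m'_j=1$, item $j$ is missing under $m'$, so $j$ is not among the observed coordinates defining $x_{\bar{m}'}$, and therefore $\eta_{m'}(x_{\bar{m}'})$ does not depend on $x_j$. Every surviving term in the log-difference is thus a function of $x_{-j}$ alone, with the particular collection of terms determined by $m_{-j}$, which establishes that the ratio is free of $x_j$ and hence that $X_j\indep M_j\mid X_{-j},M_{-j}$. Repeating the argument for each $j=1,\ldots,p$ would finish the proof.

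I expect the main obstacle to be the bookkeeping of the partial order rather than any analytic difficulty: the real content is pinning down precisely which $\eta$-terms survive in the difference and confirming that the indicator $I(m'\in\mcM)$ is harmless here, since it multiplies both sums identically on their shared terms and the surviving terms are handled by the same observation regardless of whether intermediate patterns lie in $\mcM$. A secondary point worth stating carefully is the reduction in the first step, where the degenerate case in which only one of $m^{(0)},m^{(1)}$ lies in $\mcM$ makes $M_j$ a deterministic function of $M_{-j}$ given $X_{-j}$, so that the conditional independence holds trivially there.
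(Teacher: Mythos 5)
Your proposal is correct and follows essentially the same route as the paper's proof: reduce conditional independence to showing the ratio $g\{x,m_{(j;1)}\}/g\{x,m_{(j;0)}\}$ is free of $x_j$, telescope the two sums over the partial order so only terms with $m'\preceq m_{(j;1)}$, $m'\not\preceq m_{(j;0)}$ survive, observe these are exactly the patterns with $m'_j=1$ whose $\eta_{m'}$ cannot depend on $x_j$, and dispose of the degenerate case where only one of the two patterns lies in $\mcM$. The paper phrases the key combinatorial step contrapositively (any $m'\preceq m_{(j;1)}$ with $m'_j=0$ also satisfies $m'\preceq m_{(j;0)}$), but this is the same observation as yours.
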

\begin{proof}
We denote $m_{(j;1)}$ a missingness pattern with $m_j=1$, and $m_{(j;0)}$ the same pattern except that $m_j=0$. Provided that either $m_{(j;1)}$ or $m_{(j;0)}$ belong to $\mcM$, we need to show that the expression  
\begin{align*}
\text{pr}_g(M_j=1\mid   M_{-j}=m_{-j},X=x) = \frac{g\{x,m_{(j;1)}\}}{g\{x,m_{(j;0)}\}+g\{x,m_{(j;1)}\}}
\end{align*}
does not depend on $X_j$.  Notice that if $m_{(j;0)}\notin\mcM$, then $g\{x,m_{(j;0)}\}=0$ and the result holds.  Similarly, if $m_{(j;1)}\notin\mcM$, then $g\{x,m_{(j;1)}\}=0$ and the result also holds.  Otherwise, clearly $m_{(j;0)}\prec m_{(j;1)}$, and so we can write
\begin{align*}
g\{x,m_{(j;1)}\} 
&= g\{x,m_{(j;0)}\}\exp\left\{\sum_{\substack{m'\preceq m_{(j;1)}\\m'\not\preceq m_{(j;0)}}} \eta_{m'}(x_{\bar{m}'})\right\}.
\end{align*}
Therefore,
\begin{align*}
\hbox{logit} ~ \text{pr}_g(M_j=1\mid   M_{-j}=m_{-j},X=x)&= \sum_{\substack{m'\preceq m_{(j;1)}\\m'\not\preceq m_{(j;0)}}} \eta_{m'}(x_{\bar{m}'}).
\end{align*}
Since $\eta_{m}$ depends on $X_j$ only if $m_j=0$, and a pattern $m$ with $m_j=0$ such that $m\preceq m_{(j;1)}$ necessarily also satisfies $m\preceq m_{(j;0)}$, we conclude that $\text{pr}_g(M_j=1\mid   M_{-j},X)$ is not a function of $X_j$, which holds true for all $j=1,\ldots,p$.
\end{proof}

We refer to the class of distributions obtained from Theorem \ref{theo:ident} as the itemwise conditionally independent nonresponse distributions.  This class is quite 
flexible and leads to a number of important particular cases, as we show in the following sections.  We emphasize 
that the missing-data mechanism induced by $g$ in \eqref{eq:g} is nonignorable, as $g(M=m\mid X=x)$ is a function of all the items for all $m$.  

Theorem \ref{theo:ident} provides a way of constructing an itemwise conditionally independent nonresponse distribution from a given observed-data distribution. 
If one estimates the observed-data distribution using a consistent estimator,
 then applying Theorem \ref{theo:ident} with this estimated distribution results in a consistent estimator of the itemwise conditionally independent nonresponse distribution.  We follow this plug-in approach in the illustrative examples below.

\section{An itemwise conditionally independent nonresponse model for categorical variables}
\subsection{Relation with hierarchical loglinear models for contingency tables}\label{ss:loglin}
If each variable $X_j$ is categorical taking values in $\{1,\ldots,K_j\}$, the sample space $\mcX$ is finite 
with $\prod_{j=1}^p K_j$ elements that can be organized as cells of a contingency table.  We assume that there are no structural zeroes.
  Let $\nu$ represent the counting measure on $\mcX\times \{0,1\}^p$, so that the densities $f$ and $g$ are probability mass functions. 
 In this case, the functions $\eta_{m}$ in \eqref{eq:g} take a finite number of values corresponding to each value of $\mcX_{\bar{m}}$. 
 These terms correspond to interactions between the observed items $X_{\bar{m}}$ and the missingness indicators for the missing 
variables $M_{m}$.  Indeed, in this case \eqref{eq:g} is a hierarchical loglinear model without interactions that 
involve both $X_j$ and $M_j$ for all $j$, and with one $p$-way interaction, say, $\eta^{X_{\bar{m}}M_{m}}_{x_{\bar{m}}m_{m}}$ for 
each nonparametrically identifiable probability $\text{pr}(X_{\bar{m}}=x_{\bar{m}},M=m)$.  Interactions of higher order are not present since these would necessarily involve $X_j$ and $M_j$ for some $j$. 

To fix ideas, we explicitly develop the case when $p=3$.
The $\eta_{m}$ functions in Theorem \ref{theo:ident} can be re-expressed as 
\begin{align*}
\eta_{000}(x_1,x_2,x_3) = & ~\eta_{x_1x_2x_3}^{X_1X_2X_3} + \eta_{x_1x_2}^{X_1X_2} + \eta_{x_1x_3}^{X_1X_3} + \eta_{x_2x_3}^{X_2X_3} + \eta_{x_1}^{X_1}+ \eta_{x_2}^{X_2}+ \eta_{x_3}^{X_3} + \eta,\\
\eta_{001}(x_1,x_2) = & ~\eta_{x_1x_2 1}^{X_1X_2M_3} + \eta_{x_11}^{X_1M_3} + \eta_{x_21}^{X_2M_3} + \eta_{1}^{M_3},\\
\eta_{011}(x_1) = & ~\eta_{x_1 1 1}^{X_1M_2M_3} + \eta_{11}^{M_2M_3},
\end{align*}
$\eta_{111} = \eta_{111}^{M_1M_2M_3}$, and similarly for $\eta_{100}(x_2,x_3),\eta_{010}(x_1,x_3),\eta_{110}(x_3)$, and 
$\eta_{101}(x_2)$.  This leads to a  familiar expression for loglinear models, where each first order term associated with $M_j$ is the coefficient of a dummy variable that equals 1 if $M_j=1$, first order terms associated with $X_j$ are coefficients of dummy variables for $K_j-1$ categories of $X_j$, and interaction terms are coefficients of products of the corresponding dummy variables \citep[see, e.g.,][]{Agresti12}.  Notice that in this model there is a three-way interaction for each nonparametrically identifiable probability $\text{pr}(X_{\bar{m}}=x_{\bar{m}},M=m)$.  For example, if $m=000$, then $\bar{m}=111$, $X_{\bar{m}}=(X_1,X_2,X_3)$, $M_{m}=\emptyset$, and so $\eta^{X_{\bar{m}}M_{m}}_{x_{\bar{m}}m_{m}}=\eta_{x_1x_2x_3}^{X_1X_2X_3}$, which corresponds to $\text{pr}(X_1=x_1,X_2=x_2,X_3=x_3,M_1=0,M_2=0,M_3=0)$; or if $m=011$, then $\bar{m}=100$, $X_{\bar{m}}=X_1$, $M_{m}=(M_2,M_3)$, and so $\eta^{X_{\bar{m}}M_{m}}_{x_{\bar{m}}m_{m}}=\eta_{x_1 1 1}^{X_1M_2M_3}$, which corresponds to $\text{pr}(X_1=x_1,M_1=0,M_2=1,M_3=1)$.  

To illustrate modeling under the itemwise conditionally independent nonresponse assumption, we now present an application of the 3-variable loglinear model on a commonly studied dataset with item nonresponse.

\subsection{The Slovenian plebiscite data revisited}\label{ss:Slov}

Slovenians voted for independence from Yugoslavia in a plebiscite in 1991.  \cite{RubinSternVehovar95} analyzed three questions 
related to this process included in the Slovenian public opinion survey, which was collected during the four weeks prior 
to the plebiscite.  These authors presented an analysis under ignorability of the missing-data mechanism for the following three key questions:
$X_1$: are you in favor of Slovenia's independence? $X_2$: are you in favor of Slovenia's secession from Yugoslavia? $X_3$: will you attend the plebiscite?
We call these the Independence, Secession, and Attendance questions, respectively.
The possible responses to each of these were \textsc{yes}, \textsc{no}, and \textsc{don't know}.  \cite{RubinSternVehovar95} argued that the \textsc{don't know} option can be treated as missing data, and so will we in this section. 

To implement the itemwise conditionally independent nonresponse approach, we estimate the probabilities $\text{pr}(X_{\bar{m}}=x_{\bar{m}},M=m)$, and follow the formulas of Theorem \ref{theo:ident} to obtain the $g$ density for the full-data distribution.  Here we use a Bayesian approach to estimate the observed-data distribution.  The observed data can be organized in a three-way contingency table with cells corresponding to each element of $\{$\textsc{yes}, \textsc{no}, \textsc{don't know}$\}^3$, as presented in \cite{RubinSternVehovar95}.  We follow \cite{RubinSternVehovar95} in treating these data as being a random sample from a multinomial distribution.  Our prior distribution for the cell probabilities is symmetric Dirichlet with parameter $1/27$.  Under this approach we obtain a posterior distribution on the observed-data distribution, and thereby also obtain a posterior distribution on the itemwise conditionally independent nonresponse distribution for the full data, as induced by $g$.  We took 5,000 draws from the posterior distribution of the observed-data distribution, and for each of these we applied the formulas from Theorem \ref{theo:ident} to obtain draws from the posterior distribution of $g$.  From these we can obtain draws of the implied probabilities for the items, pr$(X=x)$, under the itemwise conditionally independent nonresponse assumption.  

The probabilities pr(Attendance = \textsc{no}) and pr(Independence = \textsc{yes}, Attendance = \textsc{yes}) are of particular interest 
not only because they are practically relevant, but also because 
the results of the plebiscite provided the proportion of Slovenians who did not attend the plebiscite, and the proportion 
who attended and voted for independence.  Some authors have used this as a way of validating their modeling 
assumptions \citep[e.g.][]{RubinSternVehovar95,Molenberghsetal01}.  Arguably, however, the usefulness of these frequencies to 
validate any modeling approach is limited, given that the survey was collected during a period of a month in which propaganda 
for independence increased as days approached the plebiscite day, and there is evidence that the proportion of pro-independence 
potential voters increased steadily during that period \citep{Slovenia_exhibition}.  A perhaps more appropriate modeling approach 
would take into account the time when each interviewee responded to the survey, but we do not pursue this here. 
We therefore refer to the plebiscite results to help illustrate differences for estimates based on alternative missing data mechanisms,
and do not use them to judge which posited missingness mechanism led to the best estimates.  

\begin{figure}
\captionsetup{width=1\textwidth}
         \begin{subfigure}{0.32\textwidth}
								\centering
								\caption{}
                 \label{fig:SlovIMAR}\vspace{-.3cm}
                 \includegraphics[width=1\textwidth]{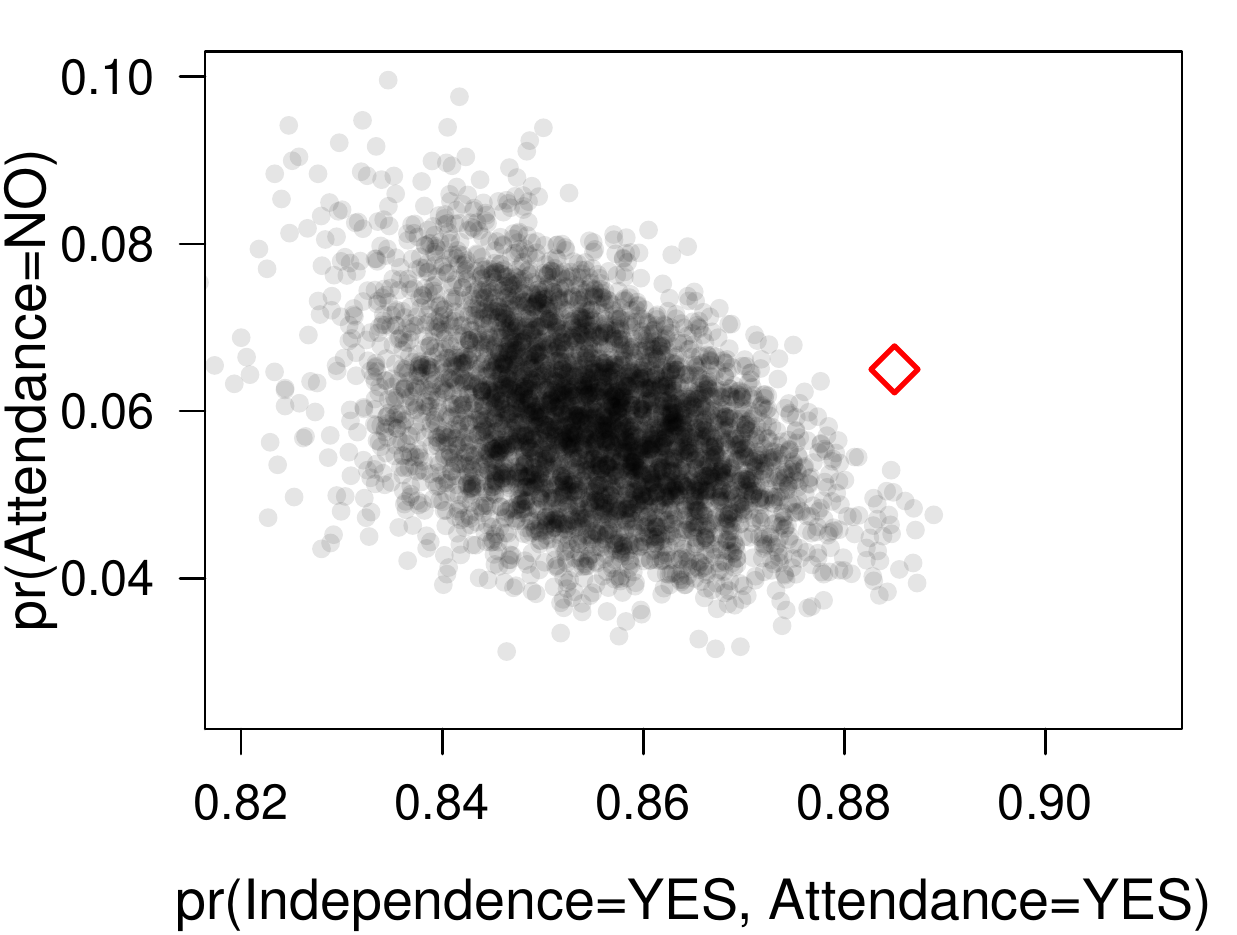}
         \end{subfigure}
         \begin{subfigure}{0.32\textwidth}
                 \centering
								\caption{}
                 \label{fig:SlovMAR}\vspace{-.3cm}
                 \includegraphics[width=1\textwidth]{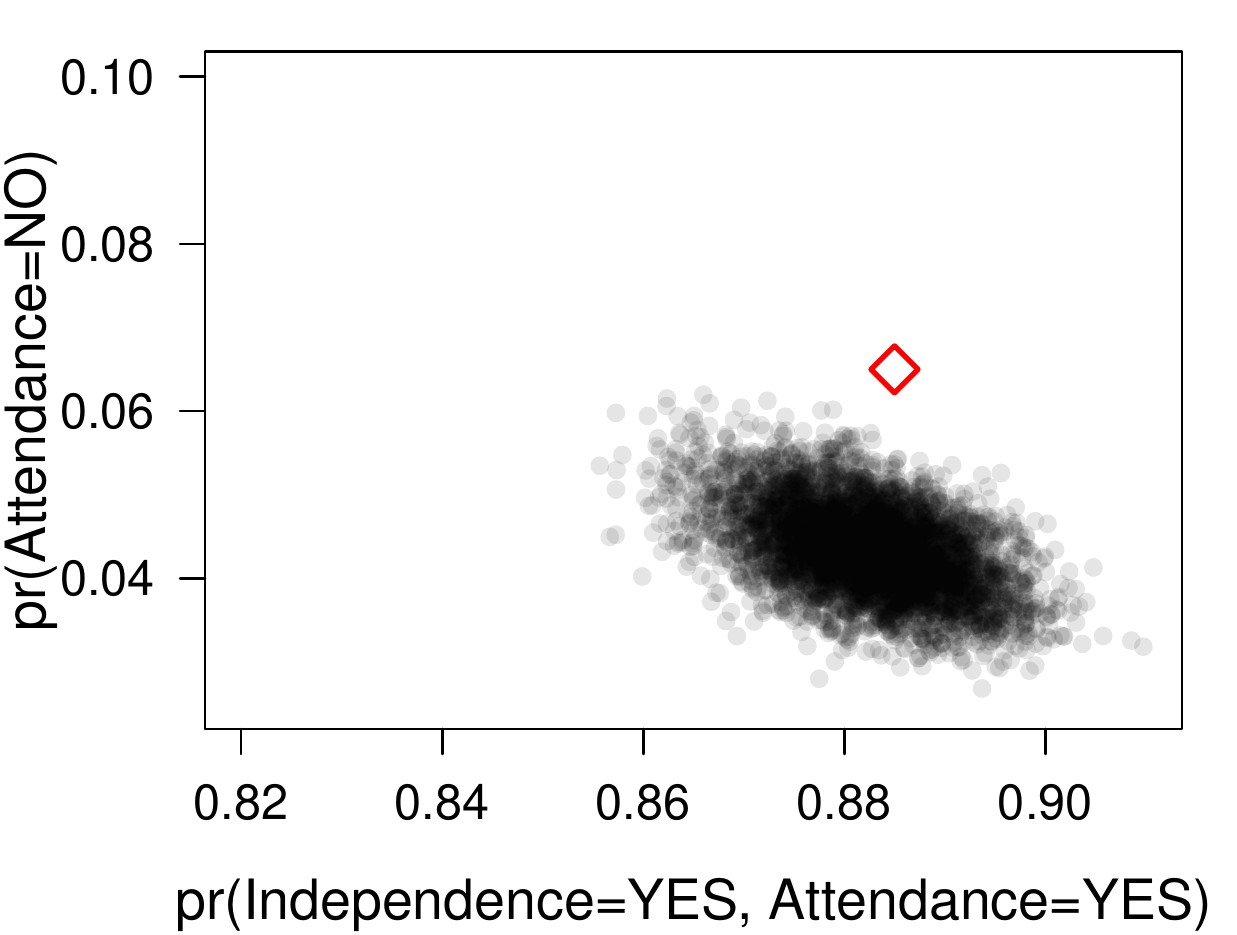}
         \end{subfigure}
     \begin{subfigure}{0.32\textwidth}
             \centering
						\caption{}
             \label{fig:SlovPMM}\vspace{-.3cm}
             \includegraphics[width=1\textwidth]{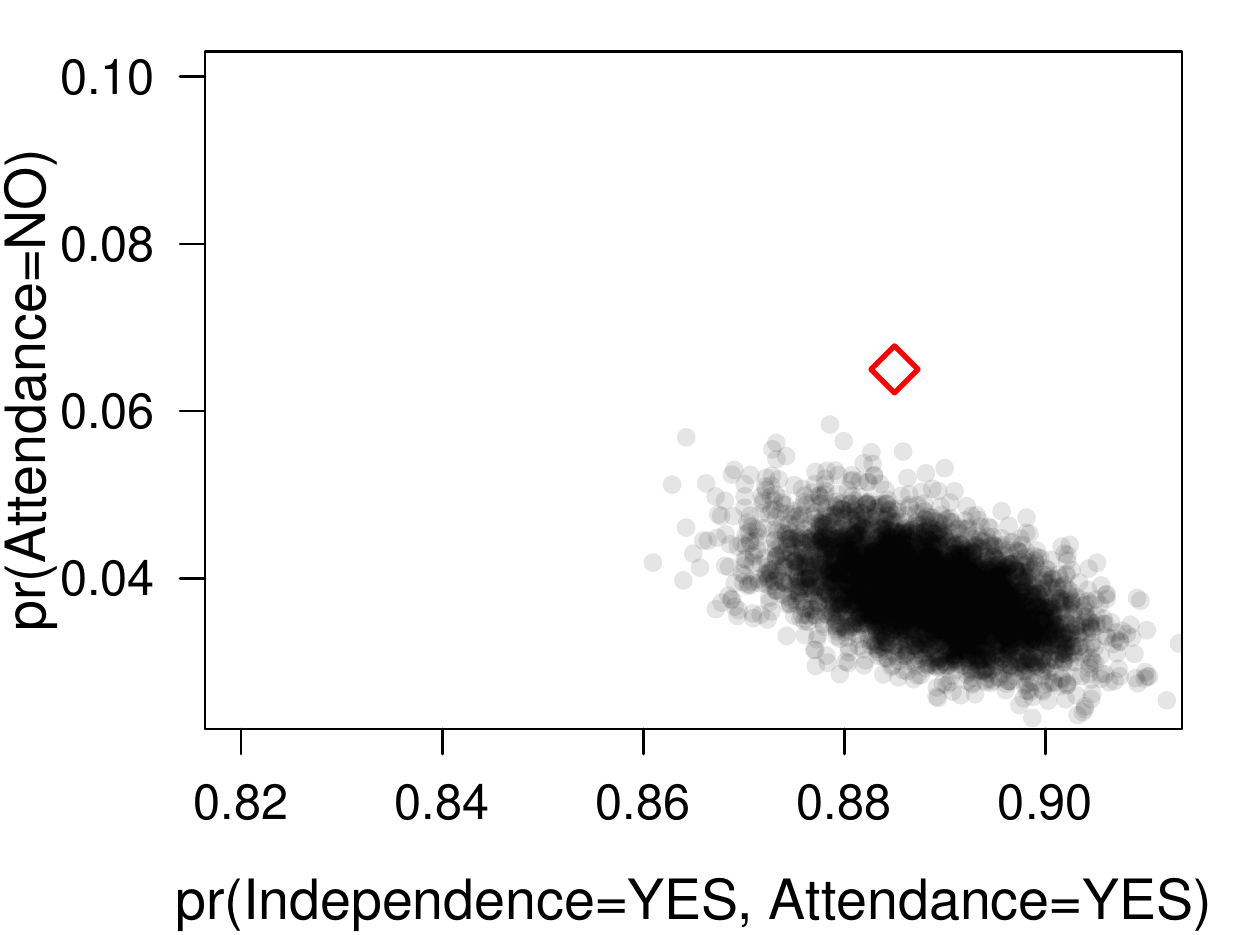}
     \end{subfigure}
\caption{Samples from joint posterior distributions of pr(Independence = \textsc{yes}, Attendance = \textsc{yes}) and pr(Attendance = \textsc{no}) under (a) itemwise conditionally independent nonresponse, (b) an ignorable model, and (c) a pattern mixture model under the complete-case missing-variable restriction of \cite{Little93}.  The plebiscite results are represented by $\color{red}{\boldsymbol \diamond}$.} \label{fig:Slovenia}
\end{figure}

Figure \ref{fig:Slovenia} displays 5,000 draws from the joint posterior distribution of pr(Independence = \textsc{yes}, Attendance = \textsc{yes}) 
and pr(Attendance = \textsc{no}) under itemwise conditionally independent nonresponse, an ignorable missing data model, and a pattern mixture model under the complete-case missing-variable restriction \citep{Little93}.  
None of these approaches produce a joint credible region that covers the plebiscite results, although each approach leads to credible intervals that cover one of the two observed frequencies.  
The key point is that the itemwise conditionally independent nonresponse modeling leads to quite different estimates than the other approaches. 
If using the itemwise conditionally independent nonresponse model returned estimates more similar to those under the ignorable and pattern mixture models, we would have concluded that the 
inferences were not too sensitive to the identifying assumption. In Section \ref{sec:sensitivity}, we perform a sensitivity analysis 
to violations of the itemwise conditionally independent nonresponse assumption for these data.

\section{An itemwise conditionally independent nonresponse model for continuous variables}

\subsection{General modeling strategies}

When the sample space $\mcX=\mathbb{R}^p$, we traditionally assume that the distribution of $X$ is absolutely continuous with respect
 to the Lebesgue measure.  We make the same assumption for the conditional distribution of $X$ given $M=m$, for each missingness pattern $m\in\mcM$, and denote its associated density by $f_m$.  Let $\nu$ represent the product between the Lebesgue and counting measures on $\mathbb{R}^p\times \{0,1\}^p$.  A density $f$ of the joint distribution of $X$ and $M$ with respect to $\nu$ is such that $\int_{\mcX_{m}}f(x,m)dx_{m}=\int_{\mcX_{m}}f_m(x)dx_{m}\text{pr}(M=m)=f_m(x_{\bar{m}})\text{pr}(M=m)$, for all $(x_{\bar{m}},m)\in\mcX_{\bar{m}}\times\mcM$.

In practice we need to specify functional forms for the densities $f_m(x_{\bar{m}})$ based on a sample before using 
the construction given by Theorem \ref{theo:ident}.  A simple option would be to give a parametric form to each $f_m(x_{\bar{m}})$.  
For example, \cite{Little93} proposed to use normal densities in the context of pattern mixture models.  We also can specify each $f_m(x_{\bar{m}})$ in a nonparametric way, for example, using kernel density estimators, provided that we have observations of $X_{\bar{m}}$ given each missingness pattern $m$.  \cite{TitteringtonMill83} followed a similar approach assuming ignorability of the missing-data mechanism.  An analogous approach from a Bayesian point of view would use Dirichlet process mixtures of normals \citep[see, e.g.,][]{EscobarWest95}.  

To fix ideas, we present an example of nonparametric modeling for two variables under the itemwise conditionally independent nonresponse assumption.
When $X=(X_1,X_2)$, $\mcX=\mathbb{R}^2$, and $\mcM=\{00,01,10,11\}$, it is easy to see that Theorem \ref{theo:ident} leads to $g_{00}(x_1,x_2) = f_{00}(x_1,x_2)$,
\begin{align}\label{eq:f01f10}
g_{01}(x_1,x_2) = \frac{f_{00}(x_1,x_2)f_{01}(x_1)}{f_{00}(x_1)}, & ~~~~~ g_{10}(x_1,x_2)  = \frac{f_{00}(x_1,x_2)f_{10}(x_2)}{f_{00}(x_2)},
\end{align}
and
\begin{align}\label{eq:f11}
g_{11}(x_1,x_2) & \propto \frac{f_{00}(x_1,x_2)f_{10}(x_2)f_{01}(x_1)}{f_{00}(x_2)f_{00}(x_1)}.
\end{align}

Hence, by estimating each of the component densities, we derive an itemwise conditionally independent nonresponse full-data distribution that can be applied to data analysis, as we now illustrate.

\subsection{Self-reporting bias in height measurements} 

The National Health and Nutrition Examination Survey is collected in the United States every two years and is
 composed of different modules that include interviews and physical examinations \citep{nhanes}.  In one of the modules, the respondents are asked to self-report their height ($X_1$),
 while in a separate module their actual height is measured by survey staff ($X_2$).  Focusing on these two variables, we can 
informally state the itemwise conditionally independent nonresponse assumption as follows. The association between self-reported height and the reporting of this value 
is explained away by the true height and whether or not this measurement is taken. Similarly, the association between the true height and whether or not this measurement is taken is explained away by the height that would be self-reported and whether or not this value is reported.

We use the combined data from the 1999--2000 and 2001--2002 survey cycles to study the joint distribution of self-reported and
 actual height among individuals who were 18 years or older by the end of year 2000.  Let $w_i$ denote the $i$th sampled 
unit's survey weight for the four year period 1999--2002 so that the U.S. population at the end of year 2000 is the target.
  We estimate the population proportions of each missingness pattern 
as $\hat\pi_{m}=\sum_i w_iI(M_i=m,\text{Age}_i\geq 18)/\sum_i w_iI(\text{Age}_i\geq 18)$ (see Table \ref{t:height}).  
 The estimated proportion of people who would not get their actual height measured given that they would self-report 
their height is $\hat{\pi}_{01}/(\hat{\pi}_{00}+\hat{\pi}_{01})=0.085$, whereas the same proportion among people who would
 not self-report their height is $\hat{\pi}_{11}/(\hat{\pi}_{10}+\hat{\pi}_{11})=0.222$, indicating that there is association
 among the missingness of these two variables.

\begin{table}
\def~{\hphantom{0}}
\caption{Summary measures of the joint distribution of self-reported height ($X_1$) and actual height ($X_2$) given each missingness pattern, under the itemwise conditionally independent nonresponse assumption}\vspace{-6mm}
\begin{center}
\begin{adjustbox}{max width=.98\textwidth}
\begin{tabular}{crcccc}
 \\
Missingness pattern ($m$) & $n_m$~ & $\hat\pi_m$ & $\hat{\text{pr}}_g(X_1>X_2\mid m)$ & $\hat{E}_g(X_1,X_2\mid m)$ & $\hat{\rho}_g(X_1,X_2\mid m)$\\[5pt]
00 & 9,792 & 0.905 & 0.594 &66.8, 66.5 & 0.899\\
01 & 1,059 & 0.084 & 0.575 &66.3, 66.0 & 0.916\\
10 &   235 & 0.009 & 0.614 &64.4, 63.9 & 0.877\\
11 &    54 & 0.002 & 0.587 &63.6, 63.3 & 0.891
\end{tabular}
\end{adjustbox}
\label{t:height}
\end{center}
$n_{m}$, number of observations with missingness pattern $m$; $\hat{\rho}_g$, the estimated correlation; subindex $g$ indicates that these quantities are obtained under the itemwise conditionally independent nonresponse assumption.
\end{table}

\begin{figure}
         \begin{subfigure}{0.32\textwidth}
								\centering
								\caption{}
                 \label{fig:height00}\vspace{-1cm}
                 \includegraphics[width=1\textwidth]{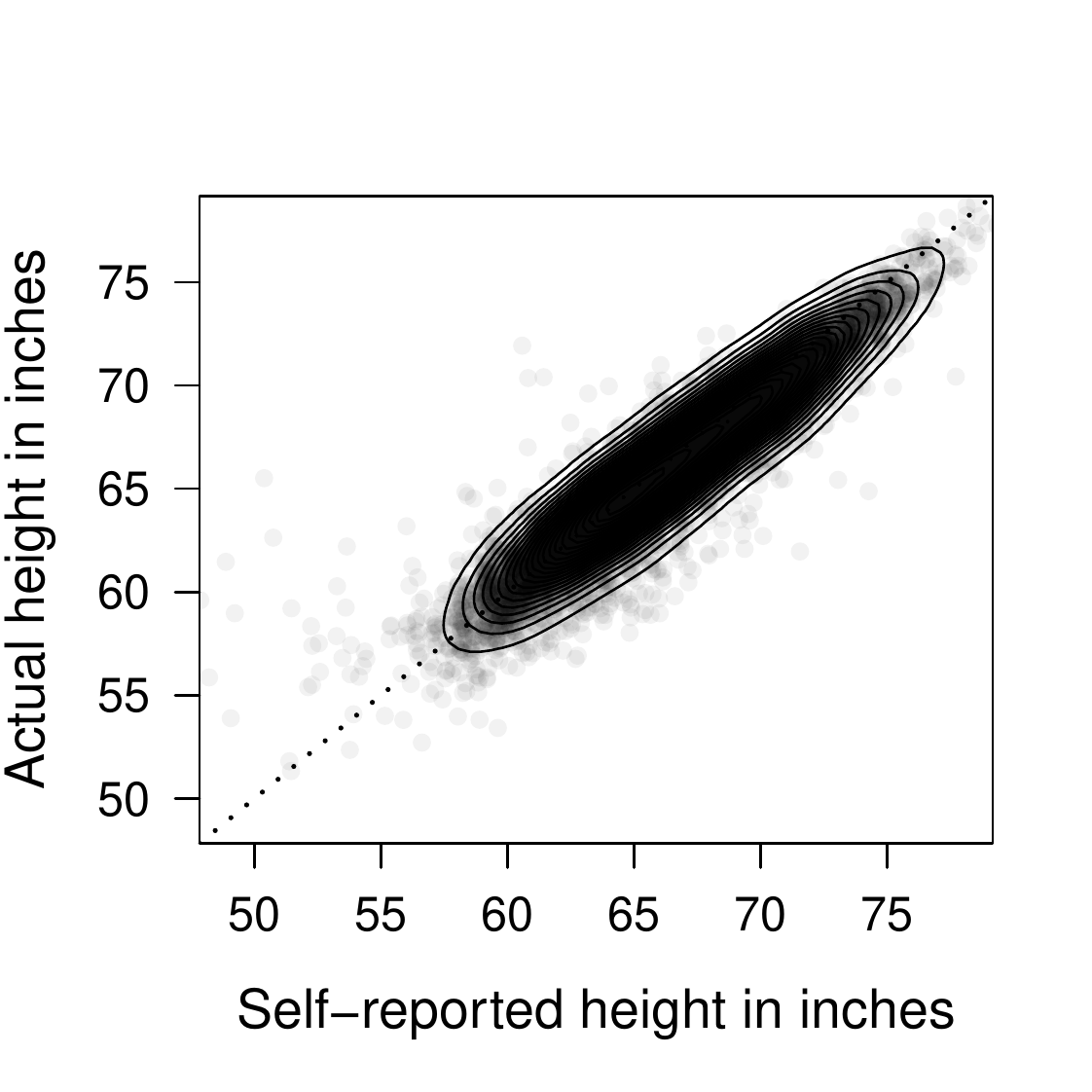}
         \end{subfigure}
         \begin{subfigure}{0.32\textwidth}
                 \centering
								\caption{}
                 \label{fig:height11}\vspace{-1cm}
                 \includegraphics[width=1\textwidth]{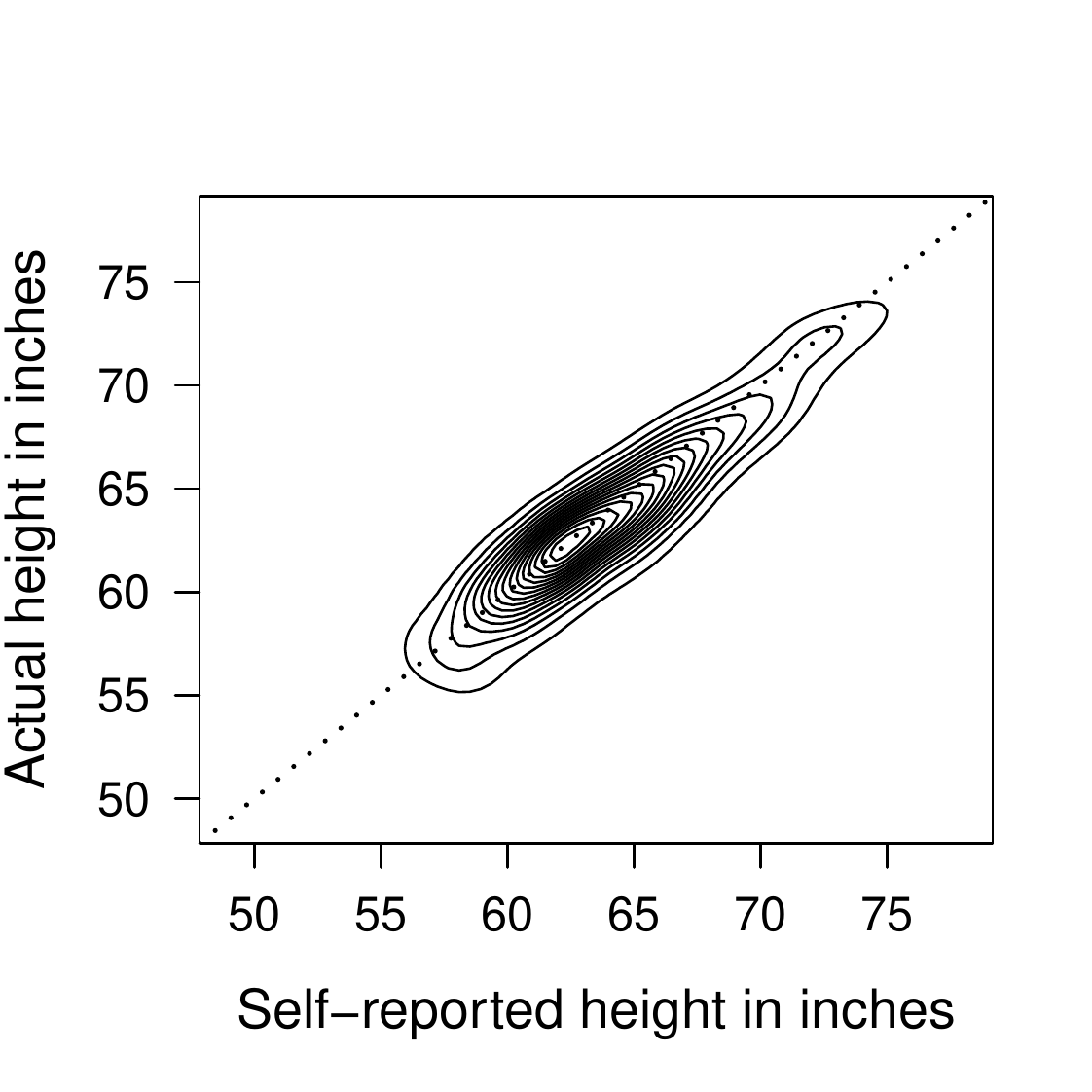}
         \end{subfigure}
     \begin{subfigure}{0.32\textwidth}
             \centering
						\caption{}
             \label{fig:prob_mis}\vspace{-1cm}
             \includegraphics[width=1\textwidth]{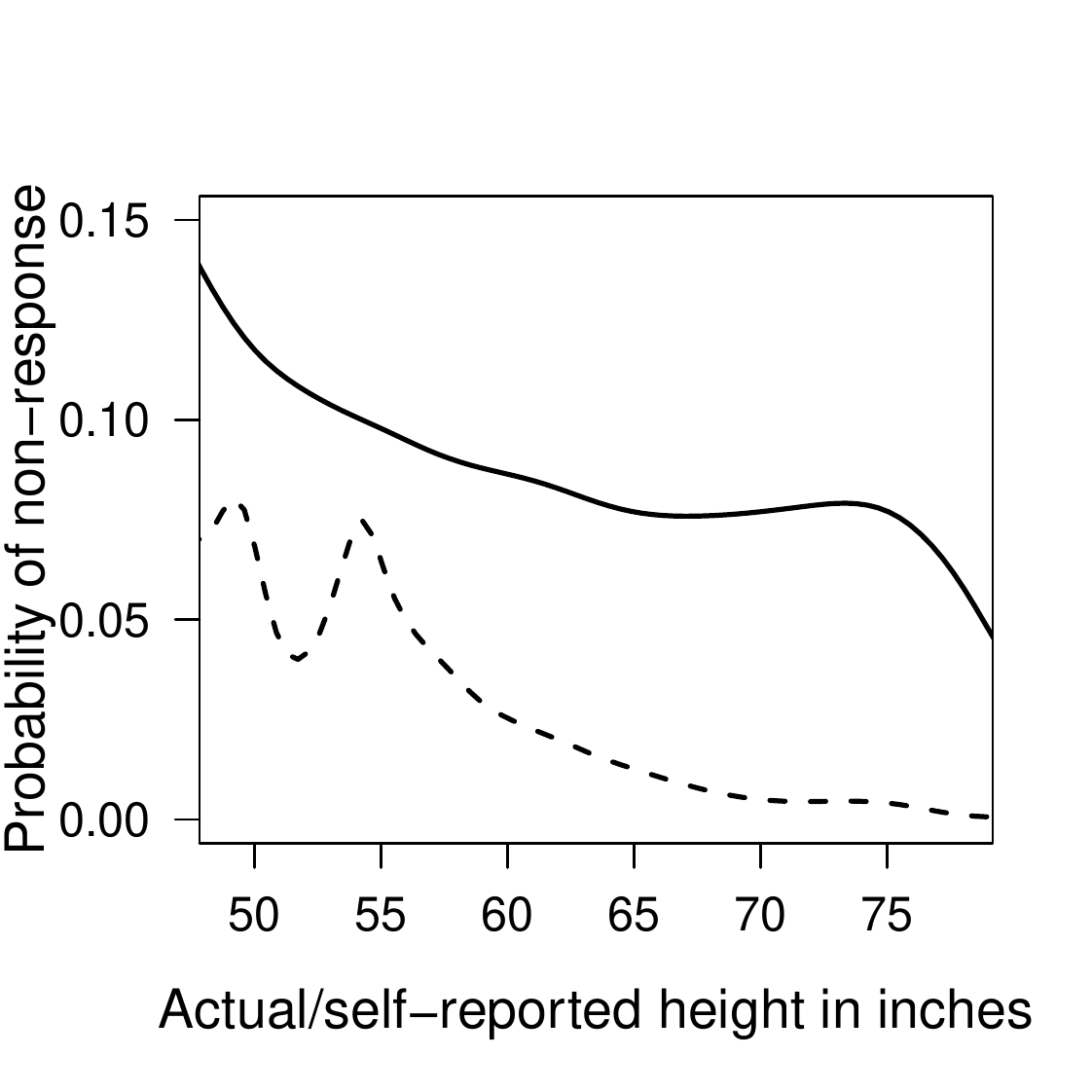}
     \end{subfigure}
\caption{(a) Self-reported versus actual height among respondents who provide both measurements, along with kernel density estimate. (b) Estimated density among individuals who report neither measurement. (c) Estimated probabilities of actual height not being measured given actual height (solid line), and not self-reporting height given the height that would be self-reported (dashed line).  Estimates in (b) and (c) rely on the itemwise conditionally independent nonresponse assumption.} \label{fig:height}
\end{figure}

We estimate each of the nonparametrically identifiable densities $f_{00}(x_1,x_2), f_{10}(x_2)$, and $f_{01}(x_1)$ using kernel density
 estimators with normal kernels, where each kernel component is weighted proportionally to $w_i$, and we choose the bandwidths 
using Silverman's rule  \citep{silverman1986density}.  We obtain the estimated conditional densities $g_m(x)$ by plugging into \eqref{eq:f01f10} and \eqref{eq:f11}.  

Figure \ref{fig:height00} displays the level sets of $\hat{f}_{00}$ along with the self-reported and actual height for individuals for
 which both measurements were recorded.  Figure \ref{fig:height11} displays the estimated density $\hat{g}_{11}$ under itemwise conditionally independent nonresponse.  
The mass of these densities is slightly higher under the 45 degree line, indicating that individuals tend to self-report higher 
values than their actual height.  In Table \ref{t:height} we present the estimated probabilities of self-reported height being larger than the actual height given each missingness pattern, under the itemwise conditionally independent nonresponse assumption, and we can see that this probability is always greater than 0.5.  The density $\hat{g}_{11}$ is also centered around smaller values than $\hat{f}_{00}$ in both dimensions.  In Table \ref{t:height} we show the estimated mean self-reported and true heights for each missingness pattern under itemwise conditionally independent nonresponse, and we can see that the results under this assumption indicate that people who do not report either measurement tend to be shorter than people who do report both measures of height.  

Finally, Figure \ref{fig:prob_mis} displays both the probability of not self-reporting height as a function of the value that would be reported (dashed line), and the probability of the actual height not being measured as a function of its value (solid line).  We can see that as both measures of height become smaller it becomes more likely for both items to be missing.  This illustrates the fact that under the itemwise conditionally independent nonresponse assumption we can capture marginal dependencies between the items and their missingness indicators.

\section{Itemwise conditionally independent nonresponse modeling with monotone missingness patterns}

When a measurement $X_j$ is recorded over $j=1,\ldots,p$ time periods, it is common for dropout or attrition to occur, 
such that once a measurement $X_j$ is not observed nor are $X_{j'}$ for $j'>j$, that is, $M_j=1$ implies $M_{j'}=1$ for 
all $j'>j$.  To use the itemwise conditionally independent nonresponse assumption, we need the probability pr$(M_j=1\mid   M_{-j}=m_{-j},X=x)$ to be defined, and so we require $m_{(j; 1)}\in\mcM$ or $m_{(j; 0)}\in\mcM$, where $m_{(j; 1)}$ is a missingness pattern with $m_j=1$, and $m_{(j; 0)}$ is the same missingness pattern except that $m_j=0$.  In the presence of dropout the only pairs of missingness patterns that have this characteristic are those that correspond to dropout times $j$ and $j+1$.  Letting $T=1+p-\sum_{j=1}^pM_j$ represent the dropout time, $T\in\{1,\ldots,p+1\}$ with $p+1$ representing no dropout, the itemwise conditionally independent nonresponse assumption can be written as pr$(T=j\mid   j\leq T\leq j+1,X=x)=\text{pr}(T=j\mid   j\leq T\leq j+1,X_{-j}=x_{-j})$, or, more naturally, it corresponds to assuming that  the sequential odds pr$(T=j+1\mid   X=x)/\text{pr}(T=j\mid   X=x)$ is not a function of $X_j$.  This assumption is encoded by the itemwise conditionally independent nonresponse distribution, which in this case has a density given by
\begin{equation*}
g(X=x,T=j) = \exp\left\{\sum_{j'\geq j}\eta_{j'}(x_{<j'})\right\}=\exp\left\{\eta_{j}(x_{<j})\right\}g(X=x,T=j+1),
\end{equation*}
where
\begin{align*}
\eta_{j}(x_{<j}) &= \log f(X_{<j}=x_{<j},T=j) - \log\int_{\mcX_{j:p}} \exp\left\{\sum_{j'> j} \eta_{j'}(x_{<j'})\right\}\mu(dx_{j:p}),
\end{align*}
with $X_{<j}=(X_{l}:l<j)$ and $X_{j:p}=(X_{l}:j\leq l\leq p)$.  From this we obtain 
$$\log\frac{\text{pr}_g(T=j+1\mid   X=x)}{\text{pr}_g(T=j\mid   X=x)}=-\eta_{j}(x_{<j}),$$
which means that under this distribution the odds of dropping out at time $j+1$ versus time $j$ only depends on measurements up to time $j-1$.  To the best of our knowledge this assumption has not been used for dealing with monotone nonresponse.

\section{Sensitivity analysis}\label{sec:sensitivity}

\subsection{Exploring departures from the itemwise conditionally independent nonresponse assumption}

One approach for checking how sensitive inferences are to assumptions for handling missing data is to compare results obtained under different approaches, as done for example in Section \ref{ss:Slov}.  An alternative approach, which has been advocated by \cite{Molenberghsetal01} and \cite{DanielsHogan08}, among others, consists in checking the effect of specific parameterized departures from a particular modeling assumption.   In this section we develop this approach for itemwise conditionally independent nonresponse modeling.

Generally speaking, define a sensitivity function as some known function $\xi: \mcX\times\mcM \mapsto \mathbb{R}$.  If for each missingness pattern $m\in\mcM$ the function defined recursively as 
\begin{align}\label{eq:etaxi}
\eta^\xi_{m}(x_{\bar{m}}) &= \log f(x_{\bar{m}},m) - \log\int_{\mcX_{m}} \exp\left\{\sum_{m'\prec m} \eta^\xi_{m'}(x_{\bar{m}'})I(m'\in\mcM)+\xi(x,m)\right\}\mu(dx_{m})
\end{align}
is finite almost surely, we can define 
\begin{equation}\label{eq:gxi}
g_\xi(x,m) = \exp\left\{\sum_{m'\preceq m} \eta^\xi_{m'}(x_{\bar{m}'})I(m'\in\mcM)+\xi(x,m)\right\},
\end{equation}
which would satisfy
$$\int_{\mcX_{m}}g_\xi(x,m)\mu(dx_{m})=f(x_{\bar{m}},m),$$ for all $m\in \mcM$, following the same reasoning as in Theorem \ref{theo:ident}.   This construction is such that the observed-data distribution is constant as a function of $\xi$, the full-data model is identified once $\xi$ is fixed, and the missing-data (extrapolation) distributions are non-constant as a function of $\xi$.  These three properties correspond to the definition of sensitivity parameter given by \cite{DanielsHogan08}.  

Notice that $\xi$ determines the conditional interaction between the $X_j$ and $M_j$ given the remaining variables, as we can see from the log odds ratios
\begin{align}\label{eq:logoddsratio}
\log \frac{g_\xi\{x,m_{(j; 1)}\}/g_\xi\{x,m_{(j; 0)}\}}{g_\xi\{x_{(j;z)},m_{(j; 1)}\}/g_\xi\{x_{(j;z)},m_{(j; 0)}\}}  = & ~ \xi\{x,m_{(j; 1)}\}-\xi\{x,m_{(j; 0)}\}\\
& -\xi\{x_{(j;z)},m_{(j; 1)}\}+\xi\{x_{(j;z)},m_{(j; 0)}\},\nonumber
\end{align}
with $x_{(j;z)}$ being equal to $x$ except that its $j$th entry equals $z$.  However, the exact interpretation 
of $\xi$ is complex and therefore difficult to specify from contextual information or expert opinion.  For example, when the 
variables $X$ are all categorical, $\xi$ determines high order interactions that correspond to functions of odds 
ratios \citep{Bishopetal75}, which are difficult to interpret once we deal with more than three variables, thereby making specifying $\xi$ challenging. 
 Following \cite{DanielsHogan08}, we would like the sensitivity function to be interpretable so that, for instance, it can be specified from contextual information.   The construction given by \eqref{eq:etaxi} and \eqref{eq:gxi} is therefore most useful 
for studying the effect of simple departures from the itemwise conditionally independent nonresponse assumption.  Here, we focus on the set of departures where the odds ratio that measures the dependence between $X_j$ and $M_j$ is constant across the possible values of $X_{-j}$ and $M_{-j}$, that is, $\xi(x,m)=\sum_{j=1}^p\xi_j(x_j,m_j)$.  If we fix $\xi_j(x_j,0)=\xi_j(x_j^*,1)=0$ for all $j$ and for a reference point $(x_1^*,\ldots,x_p^*)\in\mcX$, then $\xi_j(x_j,1)$ corresponds to the log odds ratio of nonresponse when $X_j=x_j$ versus when $X_j=x_j^*$, as in \eqref{eq:logoddsratio}.

\subsection{Sensitivity analysis for the Slovenian plebiscite data}

\cite{RubinSternVehovar95} mention that potential \textsc{no} voters for independence could have been more likely to respond \textsc{don't know} given that not supporting Slovenia's independence was an unpopular position at the time.  If this was the case, then it is possible that the conditional odds of \textsc{don't know} was higher for opponents than for supporters of independence, and not equal as assumed under itemwise conditionally independent nonresponse.  

We explore the effect of assuming that the conditional odds of responding \textsc{don't know} to the independence question 
for \textsc{no} voters was $\exp (\xi_{\text{Ind}})$ times the corresponding odds for \textsc{yes} voters, for 
$\xi_{\text{Ind}}=-5, -1, 0, 1, 5$.  We also explore the effect of fixing the analogous odds ratio 
$\exp(\xi_{\text{Att}})$ for the attendance question, for $\xi_{\text{Att}}=-1, 0, 1$. We keep $\exp (\xi_{\text{Sec}})=1$ for 
the secession question.  Here we take (\textsc{yes},\textsc{yes},\textsc{yes}) as the reference point.  This approach corresponds to augmenting the loglinear model presented in Section \ref{ss:loglin} with the terms $\eta_{\text{\textsc{no}},1}^{X_1M_1}=\xi_{\text{Ind}}$ and $\eta_{\text{\textsc{no}},1}^{X_3M_3}=\xi_{\text{Att}}$.  We follow the same procedure
 described in Section \ref{ss:Slov} to estimate the observed-data distribution, and for each of 5,000 draws from 
its posterior distribution we compute $g_\xi$ as in  \eqref{eq:gxi}, where $\xi(x,m)=\sum_{j=1}^3\xi_j(x_j,m_j)$, with $\xi_1(\text{\textsc{no}},1)=\xi_{\text{Ind}}$, $\xi_3(\text{\textsc{no}},1)=\xi_{\text{Att}}$, and the remaining values of each $\xi_j$ are set equal to zero.  

\begin{figure}
\includegraphics[width=1\textwidth]{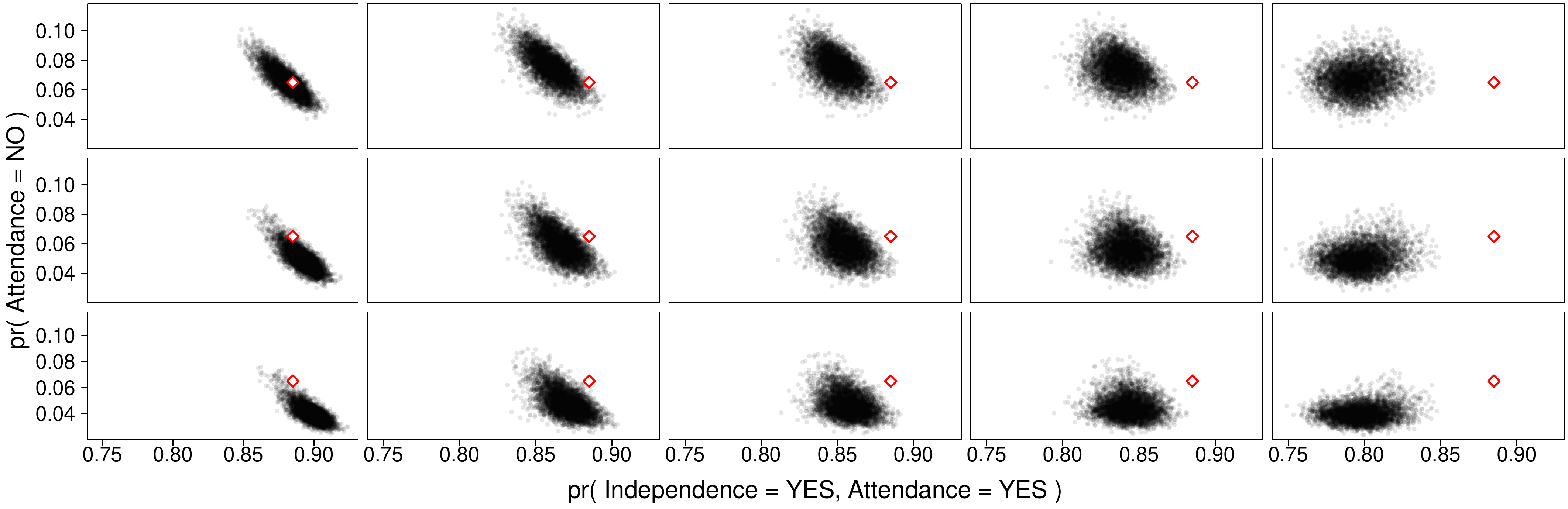}
\caption{Samples from joint posterior distributions of pr(Independence = \textsc{yes}, Attendance = \textsc{yes}) and pr(Attendance = \textsc{no}) under models that depart from the itemwise conditionally independent nonresponse assumption.  The departures are captured by the conditional log odds ratios of nonresponse for the independence question for \textsc{no} versus \textsc{yes}, $\xi_{\text{Ind}} = -5, -1, 0, 1, 5$, from left to right, and for the attendance question for \textsc{no} versus \textsc{yes}, $\xi_{\text{Att}}=-1, 0, 1$, from bottom to top. The plebiscite results are represented by $\color{red}{\boldsymbol \diamond}$.} \label{fig:sensit}
\end{figure}

Figure \ref{fig:sensit} displays the 5,000 draws from the joint posterior of pr(Independence = \textsc{yes}, Attendance = \textsc{yes}) 
and pr(Attendance = \textsc{no}) under each configuration of $g_\xi$.  In this figure, the columns of panels correspond to 
$\xi_{\text{Ind}}=-5, -1, 0, 1, 5$ from left to right, and the rows to $\xi_{\text{Att}}=-1, 0, 1$ from bottom to top; 
the central panel is the same as Figure \ref{fig:SlovIMAR}.  Positive values of $\xi_{\text{Ind}}$ correspond to positive 
conditional association between being opponent to independence and responding \textsc{don't know} to this question.  As $\xi_{\text{Ind}}$ increases, the posterior distribution of pr(Independence = \textsc{yes}, Attendance = \textsc{yes}) gets farther 
from the plebiscite result. Treating the plebiscite results as the relevant true parameter values for illustrative purposes, we would conclude that
 $\xi_{\text{Ind}}=-5$ and $\xi_{\text{Att}}=1$ are reasonable values, suggesting that the itemwise conditionally independent nonresponse assumption is not appropriate for 
these data.  This would indicate that the conditional odds of responding \textsc{don't know} to the independence question for
 \textsc{no} voters was around 0.007 times the corresponding odds for \textsc{yes} voters, 
and the conditional odds of responding \textsc{don't know} to the attendance question for non-attendants was around 2.718 times the 
corresponding odds for attendants.  A plausible interpretation is that potential \textsc{no} voters for independence were more assertive 
with their positions compared to \textsc{yes} voters, while perhaps potential non-attendants were more likely to respond \textsc{don't know} given
 that not being involved in the plebiscite process was unpopular.  Of course, these interpretations are all for illustrative purposes, 
as arguably (see Section \ref{ss:Slov}) the plebiscite results are not the appropriate benchmark given the time difference.
Of course, in practice one does not have any notion of ground truth, and the sensitivity analysis proceeds by examining multiple, plausible
values of the sensitivity parameters to examine differences in the results.  

\section{A word of caution on some related modeling assumptions}

A number of models related to ours have been proposed for dealing with nonignorable missing categorical data \citep[e.g.][]{Fay86,BakerLaird88,Stasny88,Bakeretal92,ParkBrown94}.  Generally speaking, one can consider loglinear models for the $2p$-way contingency table obtained from cross-classifying the study variables and their missingness indicators.  These models can allow each missingness indicator to depend directly on the study variable itself by imposing other constraints. In the literature, the main guidance about the identifiability 
of such models is that they are not identifiable when the number of model parameters exceeds the count of distinct observed cells (possibly plus any 
other observed information, such as supplementary marginal counts).  However, a saturated model does not guarantee a perfect fit \citep{Fay86,BakerLaird88}.  On the other hand, the loglinear model that encodes the itemwise conditionally independent nonresponse assumption does not include interactions between each study variable and its missingness indicator, but it is always identifiable given the result of Theorem \ref{theo:ident}.  It is therefore reasonable to ask: when can we obtain a nonparametric saturated model when allowing $M_j$ to depend on $X_j$ conditioning on the remaining variables in exchange of assuming $X_k\indep M_j\mid X_{-k},M_{-j}$, for some $k\neq j$?

Assuming $X_k\indep M_j\mid X_{-k}, M_{-j}$, we have that $\text{pr}(x_k\mid x_{-k},M_j=1,M_{-j}=0_{p-1})=\text{pr}(x_k\mid x_{-k},M=0_{p})$. Using the law of total probability it is easy to see that 
\begin{align}\label{cd:convex}
\text{pr}(x_k\mid x_{-\{k,j\}},M_j=1,M_{-j}=0_{p-1}) & = \sum_{l=1}^{K_j} \text{pr}(X_k=x_k\mid X_j=l, X_{-\{k,j\}}=x_{-\{k,j\}},M=0_{p}) C_l,
\end{align}
where $C_l=\text{pr}(X_j=l\mid x_{-\{k,j\}},M_j=1,M_{-j}=0_{p-1})$, and so $\sum_{l=1}^{K_j}C_l=1$.  This means that a necessary condition for $X_k\indep M_j\mid X_{-k},M_{-j}$ to hold true is that the probabilities $\text{pr}(x_k\mid x_{-\{k,j\}},M_j=1,M_{-j}=0_{p-1})$ can be written as a convex combination of $\{\text{pr}(X_k=x_k\mid X_j=l, X_{-\{k,j\}}=x_{-\{k,j\}},M=0_{p})\}_{l=1}^{K_j}$.  This condition can be checked using the observed-data distribution because all of these probabilities, except the $C_l$'s, are identifiable.  In other words, we cannot always guarantee a nonparametric saturated model when assuming $X_k\indep M_j\mid X_{-k},M_{-j}$.  

As an example, \cite{Hiranoetal01} consider the case of two variables $X_1$ and $X_2$ where the latter is subject to missingness, 
and they state that the models corresponding to the assumptions  $X_2\indep M_2\mid X_1$ (missing at random) and 
$X_1\indep M_2\mid X_2$ (which they refer to as Hausman--Wise after \cite{HausmanWise79}) cannot be ruled out based on 
the observed data alone.  While this is true for the missing at random model, the Hausman--Wise model corresponds to the assumption presented in the previous 
paragraph. Therefore, it could be rejected in certain situations from the observed-data distribution alone.  
Furthermore, when the number of categories in $X_1$ and $X_2$ differ, the number of constraints imposed by $X_2\indep M_2\mid X_1$ and $X_1\indep M_2\mid X_2$ also differ; 
the assumption in the Hausman--Wise model may correspond to a nonidentifiable model or to one that imposes constraints on the observed-data distribution.  \cite{Hiranoetal01} study 
in detail the case when $X_1$ and $X_2$ are binary and derive closed-form expressions for the full-data distribution under 
$X_1\indep M_2\mid X_2$. Their formulas are not defined when $X_1$ and $X_2$ are independent given $M_2=0$, and result in 
negative estimated probabilities when the condition given by \eqref{cd:convex} does not hold.  These and other related issues had been 
pointed out by \cite{Fay86} and \cite{BakerLaird88}.  It is reasonable to expect that similar complications may arise in more general settings.  On the other hand, these issues do not
 arise under the itemwise conditionally independent nonresponse assumption, which provides an approach that always guarantees a nonparametric saturated model.

\section*{Acknowledgement}

This research was supported by the U.S.A. National Science Foundation via the NSF-Census Research Network.  The first author is also affiliated with the National Institute of Statistical Sciences, Research Triangle Park, North Carolina 27709, U.S.A.

\bibliographystyle{apalike}
\bibliography{biblio_bka}

\end{document}